\newcommand{\setsorts}{\mathcal{S}}
\newcommand{\setvars}{\mathcal{V}}
\newcommand{\arrtype}{\Rightarrow}
\newcommand{\arrfunc}{\Longrightarrow}
\newcommand{\T}{\mathcal{I}}
\newcommand{\Terms}{\mathcal{T}\!\!\mathit{erms}}
\newcommand{\J}{\mathcal{J}}
\newcommand{\interpret}[1]{\llbracket #1 \rrbracket}
\newcommand{\Bool}{\mathbb{B}}
\newcommand{\Constructors}{\mathcal{C}\mathit{ons}}
\newcommand{\Defineds}{\mathcal{D}}
\newcommand{\arrz}{\to}
\newcommand{\arr}[1]{\arrz_{#1}}
\newcommand{\constraint}[1]{[#1]}
\newcommand{\FV}{\mathit{Var}}
\newcommand{\domain}{\mathit{Dom}}
\newcommand{\Rules}{\mathcal{R}}
\newcommand{\Rulescalc}{\mathcal{R}_{\mathtt{calc}}}
\newcommand{\subst}[2]{#1#2}
\newcommand{\Sigmaterms}{\Sigma_{\mathit{terms}}}
\newcommand{\Sigmalogic}{\Sigma_{\mathit{theory}}}
\newcommand{\Sigmatheory}{\Sigma_{\mathit{theory}}}
\newcommand{\Sigmaint}{\Sigma_{\mathit{theory}}^{\mathit{int}}}
\newcommand{\Values}{\mathcal{V}al}
\newcommand{\LVar}{\mathit{LVar}}
\newcommand{\seq}[1]{\overrightarrow{\!#1}}
\newcommand{\afun}{f}
\newcommand{\avar}{x}
\newcommand{\bvar}{y}
\newcommand{\cvar}{z}
\newcommand{\aterm}{s}
\newcommand{\bterm}{t}
\newcommand{\cterm}{u}
\newcommand{\dterm}{w}
\newcommand{\eterm}{q}
\newcommand{\asort}{\iota}
\newcommand{\bsort}{\kappa}
\newcommand{\csort}{\mu}
\newcommand{\symb}[1]{\mathsf{#1}}
\newcommand{\Z}{\mathbb{Z}}
\newcommand{\nul}{\symb{0}}
\newcommand{\one}{\symb{1}}
\newcommand{\summ}{\symb{sum}}
\newcommand{\fact}{\symb{fact}}
\newcommand{\strue}{\symb{true}}
\newcommand{\sfalse}{\symb{false}}
\newcommand{\Int}{\symb{int}}
\newcommand{\IArr}{\symb{array(int)}}
\newcommand{\BOOL}{\symb{bool}}
\newcommand{\quant}[3]{#1#2 (#3)}
\newcommand{\exshort}{Example~}
\newcommand{\tool}[1]{\textsf{#1}}
\newcommand{\ctrl}{\tool{Ctrl}}
\newcommand{\OK}{\mbox{\textbf{\textit{OK}}}}
\newcommand{\myparagraph}[1]{\medskip\noindent\textit{#1.}}
\definecolor{comment}{rgb}{0.92, 0.92, 0.92}
\newcommand{\commentbox}[1]{\colorbox{comment}{\parbox{13cm}{
\emph{Comment:} #1}}}
\newcommand{\olddiscussion}[2]{} 
\definecolor{DarkGreen}{rgb}{0.0,0.50,0.0}
\begin{document}

\markboth{C. Kop}{Quasi-reductivity of LCTRSs}

\title{Quasi-reductivity of Logically Constrained Term Rewriting Systems}
\author{CYNTHIA KOP
\affil{University of Innsbruck and University of Copenhagen}
\vspace{-12pt}
}

\begin{abstract}
This paper considers \emph{quasi-reductivity}---essentially, the
property that an evaluation cannot get ``stuck'' due to a missing
case in pattern matching---in the context of term rewriting with
logical constraints.
\end{abstract}

\begin{bottomstuff}
This work is supported by the Austrian Science Fund (FWF)
international project I963, and the Marie-Sk{\l}odowska-Curie action
``HORIP'', program H2020-MSCA-IF-2014, 658162.
\end{bottomstuff}

\maketitle

\section{Introduction}
\label{sec:intro}

The formal framework of \emph{Logically Constrained Term Rewriting
Systems} (LCTRSs), introduced in \cite{kop:nis:13}, combines term
rewriting with constraints and calculations over an arbitrary theory.
This for instance allows users to specify rules with integers, arrays
and strings, and can be used to analyze both imperative and
functional programs (without higher-order variables) in a natural way.

Many methods to analyze term rewriting systems naturally extend to
LCTRSs.  In this paper we will study \emph{quasi-reductivity}, the
property that the only ground irreducible terms are constructor terms.
We provide a simple method to prove quasi-reductivity: essentially,
we will test that the rules do not omit any patterns.

\medskip
\emph{Structure:} For completeness, we will first set out the
definition of LCTRSs, following \cite{kop:nis:13,kop:nis:14} in
Section~\ref{sec:prelim}.
In Section~\ref{sec:quasireductivity} we consider the definition of
quasi-reductivity; in Section~\ref{sec:restrictions} we present
three restrictions:
\emph{left-linearity}, \emph{constructor-soundness} and
\emph{left-value-freeness}.  The core of this work is
Section~\ref{sec:algorithm}, where
we provide an algorithm to confirm quasi-reductivity for LCTRSs which
satisfy these restrictions, and prove its soundness.

\section{Preliminaries}
\label{sec:prelim}

In this section, we briefly recall \emph{Logically Constrained Term
Rewriting Systems} (usually abbreviated as \emph{LCTRSs}),
following the definitions in~\cite{kop:nis:13}.

\myparagraph{Many-sorted terms}
We introduce terms, typing, substitutions, contexts and subterms
(with corresponding terminology) in the usual way for many-sorted
term rewriting.

\begin{definition}
We assume given a set $\setsorts$ of \emph{sorts} and an infinite set
$\setvars$ of \emph{variables}, each variable equipped with a sort.
A \emph{signature} $\Sigma$ is a set of \emph{function symbols}
$\afun$, disjoint from $\setvars$, each 
equipped with a
\emph{sort declaration} $[\asort_1 \times \cdots \times \asort_n]
\arrtype \bsort$, with all $\asort_i$ and $\bsort$ sorts.
For readability, we often write $\bsort$ instead of
$[] \arrtype \bsort$.
The set $\Terms(\Sigma,\setvars)$ of \emph{terms} over $\Sigma$ and
$\setvars$ contains any expression $\aterm$ such that $\vdash \aterm
: \asort$ can be derived for some sort $\asort$, using:

\vspace{-4pt}
\noindent
\begin{minipage}[b]{0.28\linewidth}
\[
\frac{}{\vdash \avar : \asort}
\ (\avar : \asort \in \setvars)
\]
\end{minipage}
\begin{minipage}[b]{0.71\linewidth}
\[
\frac{\vdash \aterm_1 : \asort_1\ \ \ldots\ \ \vdash
\aterm_n : \asort_n
}{\vdash \afun(\aterm_1,\ldots,\aterm_n) : \bsort}
\ (\afun : [\asort_1 \times \cdots \times \asort_n] \arrtype \bsort 
\in \Sigma)
\]
\end{minipage}
\end{definition}

We fix $\Sigma$ and $\setvars$.  Note that for every term $\aterm$,
there is a unique sort $\asort$ with $\vdash \aterm : \asort$.

\begin{definition}
Let $\vdash \aterm : \asort$.
We call $\asort$ the \emph{sort of} $\aterm$.
Let $\FV(\aterm)$ be the set of variables occurring in $\aterm$;
we say that $\aterm$ is \emph{ground} if $\FV(\aterm) = \emptyset$.
\end{definition}

\begin{definition}
A \emph{substitution} $\gamma$ is a sort-preserving total mapping
from $\setvars$ to $\Terms(\Sigma,\setvars)$.
The result $\subst{\aterm}{\gamma}$ of applying a substitution
$\gamma$ to a term $\aterm$ is $\aterm$ with all occurrences of
a variable $\avar$ replaced by $\gamma(\avar)$.
The \emph{domain} of this substitution, $\domain(\gamma)$, is the set
of variables $x$ with $\gamma(x) \neq x$.
The notation $[\avar_1:=\aterm_1,\ldots,\avar_k:=\aterm_k]$ denotes
a substitution $\gamma$ with $\gamma(\avar_i) = \aterm_i$ for
$1 \leq i \leq n$, and $\gamma(y) = y$ for $y \notin \{\avar_1,
\dots,\avar_n\}$.
\end{definition}

\begin{definition}
A \emph{context} $C$ is a term containing a typed \emph{hole} $\Box :
\asort$.
If $\bterm : \asort$, we define $C[\bterm_1]$ as $C$ with $\Box$
replaced by $\bterm$.
If we can write $\aterm = C[\bterm]$, then $\bterm$ is a
\emph{subterm} of $\aterm$.
\end{definition}

\myparagraph{Logical terms}
Specific to LCTRSs, we consider different kinds of symbols and terms.

\begin{definition}
We assume given:
\begin{itemize}
\item signatures $\Sigmaterms$ and $\Sigmalogic$ such that
  $\Sigma = \Sigmaterms \cup \Sigmalogic$;
\item a mapping $\T$ which assigns to each sort $\asort$ occurring in
  $\Sigmalogic$ a set $\T_\asort$;
\item a mapping $\J$ which assigns to each $\afun : [\asort_1 \times
  \cdots \times \asort_n] \arrtype \bsort \in \Sigmalogic$ a function
  in $\T_{\asort_1} \times \cdots \times \T_{\asort_n} \arrfunc
  \T_\bsort$;
\item for all sorts $\asort$ occurring in $\Sigmalogic$ a set
  $\Values_\asort \subseteq \Sigmalogic$ of \emph{values}: function
  symbols $a : [] \arrtype \asort$ such that $\J$ gives a bijective
  mapping from $\Values_\asort$ to $\T_\asort$.
\end{itemize}
We require that $\Sigmaterms \cap \Sigmalogic \subseteq \Values%
= \bigcup_\asort \Values_\asort$.
The sorts occurring in $\Sigmalogic$ are called \emph{theory sorts},
and the symbols \emph{theory symbols}.
Symbols in $\Sigmalogic \setminus \Values$ are \emph{calculation
  symbols}.
A term in $\Terms(\Sigmalogic,\setvars)$ is called a \emph{logical
term}.
\end{definition}

\begin{definition}
For ground logical terms, let
$
\interpret{\afun(\aterm_1,\ldots,\aterm_n)} :=
\J_\afun(\interpret{\aterm_1},\ldots,\interpret{\aterm_n})
$.
Every ground logical term $\aterm$ corresponds to a unique value $c$
such that $\interpret{\aterm} = \interpret{c}$; we say that $c$ is the
value of $\aterm$.
%
A \emph{constraint} is a logical term $\varphi$ of some sort $\BOOL$
with $\T_\BOOL = \Bool = \{ \top,\bot \}$, the set of
\emph{booleans}.
A constraint
$\varphi$ is \emph{valid} if
$\interpret{\subst{\varphi}{\gamma}} = \top$ for \emph{all}
substitutions
$\gamma$ which map $\FV(\varphi)$ to values,
and \emph{satisfiable} if 
$\interpret{\subst{\varphi}{\gamma}} = \top$ for \emph{some}
substitutions
$\gamma$ which map $\FV(\varphi)$ to values.
A substitution $\gamma$ \emph{respects}
$\varphi$ if
$\gamma(\avar)$ is a value for all $\avar \in \FV(\varphi)$ and
$\interpret{\varphi\gamma} = \top$.
\end{definition}

Formally, terms in $\Terms(\Sigmaterms,\setvars)$ have no special
function, but we see them as the primary objects of our term rewriting
systems: a reduction would typically begin and end with such terms,
with
calculation symbols
only used in intermediate terms.  Their
function is to perform calculations in the underlying theory.
Usually, values which are expected to occur in starting terms and
end terms should be included both in $\Sigmaterms$ and $\Sigmalogic$,
while values only used in constraints and calculations would only be
in $\Sigmalogic$; $\strue$ and $\sfalse$ often fall in the latter
category.

\begin{example}\label{exa:factsignature}
Let $\setsorts = \{ \Int,\BOOL \}$, and consider the signature
$\Sigma = \Sigmaterms \cup \Sigmalogic$ where
$
\Sigmaterms = \{\ \fact : [\Int] \arrtype \Int\ \} \cup \{\ \symb{n} : 
  \Int \mid n \in \Z\ \}
$
and
$\Sigmalogic = \{\ \strue,\sfalse : \BOOL,\wedge,\vee,\Rightarrow :
[\BOOL \times \BOOL] \arrtype \BOOL,\ +,-,* : [\Int \times \Int]
\arrtype \Int,\ \leq, <, = : [\Int \times \Int] \arrtype \BOOL\ 
\} \cup \{\symb{n} : \Int \mid n \in \Z\}$. 
Then both $\Int$ and $\BOOL$ are theory sorts, and the values are
$\strue$, $\sfalse$ and all symbols $\symb{n}$ representing integers.
For the interpretations, let $\T_\Int = \Z$,\ $\T_\BOOL = \Bool$, and
let $\J$ be the evaluation function which interprets these symbols as
expected.

Using infix notation, examples of logical terms are $\nul =
\nul+-\one$ and $\avar+\symb{3} \leq \bvar + -\symb{42}$.  Both are
constraints.  $\symb{5}+\symb{9}$ is also a (ground) logical term, but
not a constraint.
Expected starting terms are for instance $\fact(\symb{42})$ or
$\fact(\fact(\symb{-4}))$: ground terms fully built using symbols in
$\Sigmaterms$.
\end{example}

\myparagraph{Rules and rewriting}
We adapt the standard notions of rewriting (see, e.g.,
\cite{baa:nip:98}) by including constraints and adding rules to
perform calculations.

\begin{definition}
A \emph{rule} is a triple $\ell \arrz r\ \constraint{\varphi}$ where
$\ell$ and $r$ are terms of the same sort and $\varphi$ is a
constraint.  Here, $\ell$ has the form $f(\ell_1,\dots,\ell_n)$ and
contains at least one symbol in $\Sigmaterms \setminus \Sigmatheory$
(so $\ell$ is not a logical term).
If $\varphi = \symb{true}$ with $\J(\symb{true}) = \top$, the rule
may be denoted $\ell \arrz r$.
Let $\LVar(\ell \arrz r\ \constraint{\varphi})$ denote $\FV(\varphi)
\cup (\FV(r) \setminus \FV(\ell))$.
A substitution $\gamma$ \emph{respects} $\ell \arrz r\ 
\constraint{\varphi}$ if
$\gamma(\avar)$ is a value for all $\avar \in
\LVar(\ell \arrz r\ \constraint{\varphi})$, and 
$\interpret{\varphi\gamma} = \top$.
\end{definition}

Note that it is allowed to have $\FV(r) \not \subseteq \FV(\ell)$, but
fresh variables in the right-hand side may only be instantiated
with \emph{values}.  This is done to model user input or random
choice, both of which would typically produce a value.  Variables in
the left-hand sides do not need to be instantiated with values
(unless they also occur in the constraint); this is needed for
instance to support a lazy evaluation strategy.

\begin{definition}
We assume given a set of rules $\Rules$ and let
$\Rulescalc$ be the set
$
\{ \afun(\avar_1,\ldots,\avar_n) \arrz \bvar\ 
\constraint{\bvar = \afun(\seq{\avar})} \mid \afun : [\asort_1
\times \cdots \times \asort_n] \arrtype \bsort \in \Sigmalogic
\setminus \Values \}$
(writing $\seq{\avar}$ for $\avar_1,\ldots,\avar_n$).
The \emph{rewrite relation} $\arrz_{\Rules}$ is a binary relation
on terms, defined by:
\[
\begin{array}{rcll}
C[\ell\gamma] & \arr{\Rules} & C[r\gamma]\ &
\text{if}\ 
\ell \arrz r\ \constraint{\varphi} \in \Rules \cup \Rulescalc\ 
\text{and}\ 
\gamma\ \text{respects}\ \ell \arrz r\ \constraint{\varphi} \\
\end{array}
\]
Here, $C$ is an arbitrary context.
%
A reduction step with $\Rulescalc$ is called a \emph{calculation}.
A term is in \emph{normal form} if it cannot be reduced with
$\arr{\Rules}$.
\end{definition}

We will usually call the elements of $\Rulescalc$ rules---or
\emph{calculation rules}--even though their left-hand side is a
logical term.

\begin{definition}
For $\afun(\ell_1,\ldots,\ell_n) \arrz r\ \constraint{\varphi}
\in \Rules$ we call $\afun$ a \emph{defined symbol}; non-defined
elements of $\Sigmaterms$ and all values are \emph{constructors}.
Let $\Defineds$ be the set of all defined symbols,
and $\Constructors$ the set of constructors.
A term in $\Terms(\Constructors,\setvars)$ is a
\emph{constructor term}.
\end{definition}

Now we may define a \emph{logically constrained term rewriting
system} (LCTRS) as the abstract rewriting system $(\Terms(\Sigma,
\setvars),\arr{\Rules})$.  An LCTRS is usually given by
supplying $\Sigma$,\ $\Rules$, and also $\T$ and $\J$ if these are not
clear from context.

\begin{example}\label{exa:factlctrs}
To implement an LCTRS calculating the \emph{factorial} function, we
use the signature $\Sigma$ from \exshort\ref{exa:factsignature}, and
the following rules:
\[
\Rules_{\fact}
 = \{\ 
\fact(x) \to \one\ \constraint{x \leq \nul}\ \ ,\ \ 
\fact(x) \to x * \fact(x-\one)\ \constraint{\neg (x \leq \nul)}
\ \}
\]
Using calculation steps, a term $\symb{3}-\symb{1}$ reduces to
$\symb{2}$ in one step (using the calculation rule $\avar-
\bvar\arrz\cvar\ \constraint{\cvar = \avar-\bvar}$), and $\symb{3} *
(\symb{2} * (\symb{1} * \symb{1}))$ reduces to $\symb{6}$ in three
steps.  Using also the rules in
$\Rules_{\fact}$,
$\symb{fact}(\symb{3})$ reduces in ten steps to $\symb{6}$.
\end{example}

\begin{example}\label{exa:arraysumlctrs}
To implement an LCTRS calculating the sum of elements in an array,
let $\T_\BOOL = \Bool,\ \T_\Int = \Z,\ \T_\IArr = \Z^*$, so
$\IArr$ is mapped to finite-length integer sequences.
Let $\Sigmalogic = \Sigmaint \cup \{ \symb{size} : [\IArr] \arrtype
\Int,\ \symb{select} : [\IArr \times \Int] \arrtype \Int \}\ \cup\ 
\{ \symb{a} \mid a \in \Z^* \}$.
(We do \emph{not} encode arrays as lists: every ``array''---integer
sequence---$a$ corresponds to a unique symbol $\symb{a}$.)
The interpretation function $\J$ behaves on $\Sigmaint$ as usual,
maps the values $\symb{a}$ to the corresponding integer sequence, and
has:
\[
\begin{array}{rcll}
\J_{\symb{size}}(a) & = & k & \text{if}\ a = \langle n_0,\ldots,n_{k-1} \rangle \\
\J_{\symb{select}}(a,i) & = & n_i & \text{if}\ a = \langle n_0,\ldots, n_{k-1}
  \rangle\ \text{with}\ 0 \leq i < k \\
  & & 0 & \text{otherwise} \\
\end{array}
\]
In addition, let:
\[
\begin{array}{rcl}
\Sigmaterms & = & \{\ \summ : [\IArr] \arrtype \Int,\ \symb{sum0} :
  [\IArr \times \Int] \arrtype \Int\ \}\ \cup \\
  & & \{\ \symb{n} : \Int \mid n \in \Z\ \} \cup \{\ \symb{a} \mid a \in \Z^*\ \}
\end{array}
\]
\[
\Rules
 = \left\{
\begin{array}{rcll}
\summ(\avar) & \arrz & \symb{sum0}(\avar,\symb{size}(\avar)-\one) \\
\symb{sum0}(\avar,k) & \arrz & \symb{select}(\avar,k) +
  \symb{sum0}(\avar,k-\one) & \constraint{k \geq \nul} \\
\symb{sum0}(\avar,k) & \arrz & \nul & \constraint{k < \nul} \\
\end{array}\right\}
\]
\end{example}

Note the special role of \emph{values}, which are new in LCTRSs
compared to older styles of constrained rewriting.  Values are the
representatives of the underlying theory.  All values are constants
(constructor symbols $v()$ which do not take arguments),
even if they represent complex structures, as seen in
\exshort\ref{exa:arraysumlctrs}.  However, not all constants are values;
for instance a constant constructor $\symb{error} \in \Sigmaterms$
would not be a value.
We will often work with signatures having infinitely many values.
Note that we do not match modulo theories, e.g.~we do not equate
$\nul + (\avar + \bvar)$ with $\bvar + \avar$ for matching.

Note also the restriction on variables in a constraint being
instantiated by values; for instance in \exshort\ref{exa:factlctrs},
a term $\fact(\symb{fact}(\symb{3}))$ reduces only at the
\emph{inner} $\symb{fact}$.

%

\section{Quasi-reductivity}\label{sec:quasireductivity}

The most high-level definition of quasi-reductivity is likely the
following.

\begin{definition}[Quasi-reductivity]
An LCTRS $(\Sigmaterms,\Sigmalogic,\T,\J,\Rules)$ is quasi-reductive
if for all $\aterm \in \Terms(\Sigma,\emptyset)$ one of the following
holds:
\begin{itemize}
\item $\aterm \in \Terms(\Constructors,\emptyset)$ (we say: $\aterm$
  is a \emph{ground constructor term});
\item there is a $\bterm$ such that $\aterm \arr{\Rules} \bterm$ (we
  say: $\aterm$ \emph{reduces}).
\end{itemize}
\end{definition}

Note that $\Terms(\Sigma,\emptyset)$ is the set of \emph{ground}
terms.  Another common definition concerns only the reduction of
``basic'' ground terms, but is equivalent:

\begin{lemma}\label{lem:quasirephrase}\label{lem:minex}
An LCTRS is quasi-reductive if and only if all terms $\afun(\aterm_1,
\ldots,\aterm_n)$ with $\afun$ a defined or calculation symbol and
all $\aterm_i \in \Terms(\Constructors,\emptyset)$, reduce.
\end{lemma}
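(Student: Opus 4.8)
The plan is to prove the two implications separately; the forward direction is essentially immediate, while the backward direction calls for a structural induction.

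For the ``only if'' direction, I would argue as follows. Assume the LCTRS is quasi-reductive, and let $s = \afun(\aterm_1, \ldots, \aterm_n)$ with $\afun$ a defined or calculation symbol and each $\aterm_i \in \Terms(\Constructors, \emptyset)$. Since a defined or calculation symbol is by definition not a constructor, the root of $s$ lies outside $\Constructors$, so $s \notin \Terms(\Constructors, \emptyset)$; that is, $s$ is not a ground constructor term. Quasi-reductivity then leaves only the second alternative, so $s$ reduces. This uses nothing beyond the definitions.

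For the ``if'' direction, I would establish quasi-reductivity by induction on the size of a ground term $s$. Write $s = \bfun(s_1, \ldots, s_m)$, where $\bfun$ is the root symbol (with $m = 0$ when $s$ is a constant; note that a ground term is never a bare variable, so a root always exists). By the induction hypothesis each argument $s_i$ is either a ground constructor term or reduces. The first subcase to dispatch is when some $s_i$ reduces, say $s_i \arr{\Rules} t_i$: here I would appeal to closure of $\arr{\Rules}$ under contexts, which is built into the definition (an arbitrary context $C$ is allowed in $C[\ell\gamma] \arr{\Rules} C[r\gamma]$), to embed the step and conclude $s = \bfun(\ldots, s_i, \ldots) \arr{\Rules} \bfun(\ldots, t_i, \ldots)$, so $s$ reduces. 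In the remaining subcase every $s_i$ is a ground constructor term, and I would split on the status of the root $\bfun$: if $\bfun$ is a constructor then $s \in \Terms(\Constructors, \emptyset)$ is itself a ground constructor term; and if $\bfun$ is not a constructor then, as noted below, it is a defined or calculation symbol, so the hypothesis of this direction applies directly to $s = \bfun(s_1, \ldots, s_m)$ and yields that $s$ reduces.

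The one place that deserves care is the exhaustiveness of the case split on $\bfun$, i.e.\ that a non-constructor root is always a defined or calculation symbol. I would verify this from the definitions: since $\Sigma = \Sigmaterms \cup \Sigmalogic$, a symbol $\bfun$ that is not a constructor is in particular not a value and is either a defined symbol or lies outside $\Sigmaterms$; in the latter case $\bfun \in \Sigmalogic \setminus \Values$, which is exactly a calculation symbol. With this observation the induction closes. I do not expect a genuine obstacle here: the content of the lemma is really just the combination of structural induction on ground terms with the context-closure of the rewrite relation, and the main thing to get right is the bookkeeping of the classification of root symbols.
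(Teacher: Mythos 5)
Your proof is correct and takes essentially the same approach as the paper: the paper argues the nontrivial direction by taking a minimal ground irreducible non-constructor term and deriving a contradiction, which is just the contrapositive form of your structural induction (both hinge on context-closure of the rewrite relation and on the observation that a non-constructor root must be a defined or calculation symbol).
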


\begin{proof}
If the LCTRS is quasi-reductive, then each such $\afun(\seq{\aterm})$
reduces, as it is not a constructor term.  If the LCTRS is not
quasi-reductive, then let $\afun(\seq{\aterm})$ be a minimal ground
irreducible non-constructor term.  By minimality, all $\aterm_i$
must be constructor terms.  If $\afun$ is a constructor, then the
whole term is a constructor term, contradiction, so $\afun$ is either
a defined symbol or a calculation symbol.
\qed
\end{proof}

\section{Restrictions}\label{sec:restrictions}

For our algorithm in the next section, which proves that a given LCTRS
is quasi-reductive, we will limit interest to LCTRSs which satisfy the
following restrictions:

\begin{definition}[Restrictions]\label{def:restrictions}
An LCTRS $(\Sigmaterms,\Sigmalogic,\T,\J,\Rules)$ is:
\begin{itemize}
\item \emph{left-linear} if for all rules $\ell \arrz r\ 
  \constraint{\varphi} \in \Rules$: every variable in $\ell$ occurs
  only once;
\item \emph{constructor-sound} if there are ground constructor terms
  for every sort $\asort$ such that some $\afun : [\ldots \times
  \asort \times \ldots] \arrtype \bsort \in \Defineds$ (so for every
  input sort of a defined symbol);
\item \emph{left-value-free} if the left-hand sides of rules do not
  contain any values.
\end{itemize}
\end{definition}

Note that any LCTRS can be turned into a left-value-free one, by
replacing a value $v$ by a fresh variable and adding a constraint
$\avar=v$ instead.  Constructor-soundness seems quite natural,
with a sort representing the set of ground constructor terms of that
sort.  Left-linearity is probably the greatest limitation; however,
note that non-left-linear systems impose \emph{syntactic} equality.
In a rule
\[
\symb{addtoset}(\avar,\symb{setof}(\avar,\mathit{rest})) \arrz
\symb{setof}(\avar,\mathit{rest})
\]
we can reduce $\symb{addtoset}(\symb{3} + \symb{4},\symb{setof}(\symb{3}+\symb{4},\aterm))$
immediately to $\symb{setof}(\symb{3}+\symb{4},\aterm)$.  However, we cannot reduce
$\symb{addtoset}(\symb{3} + \symb{4},\symb{setof}(\symb{4}+\symb{3},\aterm))$ with this rule.
There is also no syntactic way to check for \emph{inequality}.
Therefore, it seems like we could better formulate this rule and its
complement using constraints, or (if the sort of $\avar$ has
non-value constructors) by a structural check.  The rule above and its
complement could for instance become:
\[
\begin{array}{rcll}
\symb{addtoset}(\avar,\symb{setof}(\bvar,\mathit{rest})) & \arrz &
\symb{setof}(\bvar,\mathit{rest}) & \constraint{\avar = \bvar} \\
\symb{addtoset}(\avar,\symb{setof}(\bvar,\mathit{rest})) & \arrz &
\symb{setof}(\bvar,\symb{addtoset}(\avar,\mathit{rest})) &
\constraint{\avar \neq \bvar} \\
\end{array}
\]
In this light, left-linearity also seems like a very natural
restriction.
\vspace{4pt}

\commentbox{In~\cite{fal:kap:12} a similar method is introduced to
prove quasi-reductivity of a different style of constrained rewriting.
There, however, the systems are additionally restricted to be
\emph{value-safe}: the only constructors of sorts occurring in
$\Sigmalogic$ are values.  We drop this requirement here, because it
is not necessary in the definition of LCTRSs.}

\medskip
Constructor-soundness, arguably the most innocent of these
restrictions, allows us to limit interest to certain well-behaved
rules when proving quasi-reductivity:

\begin{theorem}\label{thm:constructorrules}
A constructor-sound LCTRS with rules $\Rules$ is quasi-reductive if
and only if the following conditions both hold:
\begin{itemize}
\item the same LCTRS restricted to constructor rules $\Rules' :=
  \{ \afun(\seq{\ell}) \arrz r\ \constraint{\varphi} \in \Rules \mid
  \quant{\forall}{i}{\ell_i \in \Terms(\Constructors,\setvars)} \}$ is
  quasi-reductive;
\item all constructor symbols with respect to $\Rules'$ are also
  constructors w.r.t.~$\Rules$.
\end{itemize}
\end{theorem}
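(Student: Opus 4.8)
The plan is to use \lemshort\ref{lem:minex} to reduce both quasi-reductivity claims to the reduction of \emph{basic} ground terms $\afun(\seq{c})$, where each $c_i$ is a ground constructor term, and then to exploit two structural facts. The first is that ground constructor terms are irreducible under both $\arr{\Rules}$ and $\arr{\Rules'}$: every redex carries a defined or calculation symbol at its root, and neither is a constructor, so a term built solely from constructors contains no redex; consequently any reduction of a basic term must take place at the root. The second is a matching lemma: if $\afun(\seq{c})$ reduces at the root under $\arr{\Rules}$, then either $\afun$ is a calculation symbol and the step uses $\Rulescalc$, or $\afun$ is defined and the step uses some $\afun(\seq{\ell}) \arrz r\ \constraint{\varphi} \in \Rules$ with matcher $\gamma$ satisfying $\ell_i\gamma = c_i$. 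Since a substitution only renames variables and thus preserves every function symbol of $\ell_i$, and $c_i \in \Terms(\Constructors,\emptyset)$, each $\ell_i$ lies in $\Terms(\Constructors,\setvars)$, so the firing rule belongs to $\Rules'$; and as $\Rulescalc$ is shared between $\Rules$ and $\Rules'$, in both cases the very same step is also an $\arr{\Rules'}$ step.

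For the ``if'' direction I would assume both conditions. Condition two states $\Constructors' \subseteq \Constructors$ (writing $\Constructors'$ for the constructors w.r.t.\ $\Rules'$), while $\Rules' \subseteq \Rules$ forces $\Defineds' \subseteq \Defineds$ and hence $\Constructors \subseteq \Constructors'$; so the two constructor sets, and with them the two sets of basic terms, coincide. A basic term then reduces under $\arr{\Rules'}$ by quasi-reductivity of $\Rules'$, and since $\arr{\Rules'} \subseteq \arr{\Rules}$ it reduces under $\arr{\Rules}$ as well; \lemshort\ref{lem:minex} then yields quasi-reductivity of $\Rules$. This half is essentially bookkeeping once the constructor sets are aligned.

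For the ``only if'' direction I would assume $\Rules$ quasi-reductive, and I expect establishing condition two to be the main obstacle. Suppose some $\afun$ were defined w.r.t.\ $\Rules$ but a constructor w.r.t.\ $\Rules'$. Constructor-soundness (\defshort\ref{def:restrictions}) supplies a ground constructor term of every input sort of $\afun$, so I can form a probe $\afun(\seq{c})$ with all $c_i \in \Terms(\Constructors,\emptyset)$; this term is not a ground constructor term, hence by quasi-reductivity it reduces, necessarily at the root, and by the matching lemma the firing rule lies in $\Rules'$ --- contradicting that $\afun$ is a constructor w.r.t.\ $\Rules'$. With condition two established the two constructor sets coincide again, and condition one follows immediately: every basic term reduces under $\arr{\Rules}$, at the root, and by the matching lemma that step is also an $\arr{\Rules'}$ step, so \lemshort\ref{lem:minex} gives quasi-reductivity of $\Rules'$. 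The point to handle carefully throughout is that ``constructor'' and ``defined'' are relative to the rule set, so I must keep $\Constructors$ and $\Defineds$ (w.r.t.\ $\Rules$) notationally distinct from their $\Rules'$-variants until condition two permits identifying them.
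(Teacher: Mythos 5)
Your proposal is correct and follows essentially the same route as the paper's proof: both directions rest on Lemma~\ref{lem:minex}, on the observation that a rule matching $\afun(\seq{c})$ with ground constructor arguments must itself have constructor-term arguments (your ``matching lemma'', which the paper states inline as ``the rules in $\Rules \setminus \Rules'$ also cannot match''), and on constructor-soundness to build the probe term for the second condition. One wording slip --- a substitution does not merely ``rename variables'', though it does preserve every function symbol already present in $\ell_i$, which is all your argument needs --- does not affect correctness.
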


\begin{proof}
Suppose $\Rules'$ is quasi-reductive, and constructor terms are the
same in either LCTRS.
Then also $\Rules$ is quasi-reductive, as anything which reduces
under $\Rules'$ also reduces under $\Rules$.
Alternatively, suppose $\Rules$ is quasi-reductive.

Towards a contradiction, suppose $\Rules'$ has constructor symbols
which are not constructors in $\Rules$; let $\afun$ be such a symbol.
As $\afun$ is a constructor for $\Rules'$, there are no rules
$\afun(\seq{l}) \arrz r\ \constraint{\varphi} \in \Rules \cup
\Rulescalc$ which match terms of the form $\afun(\seq{\aterm})$ with
all $\aterm_i \in \Terms(\Constructors,\emptyset)$.
Because $\afun$ is a defined symbol, such terms exist by
constructor-soundness.
As nothing matches $\afun(\seq{\aterm})$ itself, and its strict
subterms are constructor terms so cannot be reduced, this term
contradicts quasi-reductivity of $\Rules$!

For the first point, suppose towards a contradiction that $\Rules'$
is not quasi-reductive, yet $\Rules$ is, and the same terms are
constructor terms in either.  By Lemma~\ref{lem:minex} there is some
irreducible $\afun(\aterm_1,\ldots,\aterm_n)$ with all $\aterm_i$
constructor terms and $\afun$ not a constructor.  As the $\aterm_i$
are constructor terms, the rules in $\Rules \setminus \Rules'$ also
cannot match!  Thus, the term is also irreducible with
$\arr{\Rules}$, contradiction.
\qed
\end{proof}

\section{An algorithm to prove quasi-reductivity}\label{sec:algorithm}

We now present an algorithm to confirm quasi-reductivity of a given
LCTRS satisfying the restrictions from
Definition~\ref{def:restrictions}.
Following Theorem~\ref{thm:constructorrules}, we can---without loss of
generality---limit interest to \emph{constructor TRSs}, where the
immediate arguments in the left-hand sides of rules are all
constructor terms.

\myparagraph{Main Algorithm}
We assume given sequences $\asort_1,\ldots,\asort_n$ of theory sorts,
$\bsort_1,\ldots, 
\bsort_m$ of sorts, and $\avar_1,\ldots,\avar_n$ of
variables, with each $\avar_i : \asort_i \in \setvars$.
Moreover, we assume given a set $A$ of pairs $(\seq{s},\varphi)$.
Here, $\seq{s}$ is a sequence $s_1,\ldots,s_m$ of constructor terms
which do not contain values, such that $\vdash s_i : \bsort_i$,
and $\varphi$ is a logical constraint.  The $s_i$ have no
overlapping variables with each other or the $\avar_j$; that is, a
term $\afun(\avar_1,\ldots,\avar_n,s_1,\ldots,s_m)$ would be linear.
Variables in $\seq{\avar}$ and $\seq{s}$ may occur in $\varphi$,
however.

Now, for $b \in \{\mathsf{term},\mathsf{value},\mathsf{either}
\}$,\footnote{This parameter indicates what constructor instantiations
we should consider for $\aterm_1$.} define the function
$\OK(\seq{\avar},A,b)$ as follows; this construction is
well-defined by induction first on the number of function symbols
occurring in $A$, second by the number of variables occurring in $A$,
and third by the flag $b$ (with $\mathsf{either} > \mathsf{term},
\mathsf{value}$).
Only symbols in the terms $\aterm_i$ are counted for the first
induction hypothesis, so not those in the constraints.
\begin{itemize}
\item if $m = 0$:
  let $\{ y_1, \ldots, y_k \} = (\bigcup_{((),\varphi) \in A}
    \FV(\varphi)) \setminus \{ x_1, \ldots, x_n \}$;
  \begin{itemize}
  \item if $\quant{\exists}{y_1 \ldots y_k}{\bigvee_{((),\varphi) \in
    A} \varphi}$ is valid, then \textbf{true}
  \item else \textbf{false}
  \end{itemize}
  Note that if $A = \emptyset$, this returns \textbf{false}.
\item if $m > 0$ and $b = \mathsf{either}$, then consider
  $\bsort_1$.  If $\bsort_1$ does not occur in $\Sigmalogic$,
  the result is: 
      \[
      \OK(\seq{\avar},A,\mathsf{term})
      \]
  If $\bsort_1$ occurs in $\Sigmalogic$ and all
  constructors with output sort $\bsort_1$ are values, then the
  result is: 
      \[
      \OK(\seq{\avar},A,\mathsf{value})
      \]
  If $\bsort_1$ occurs in $\Sigmalogic$ but there are also non-value
  constructors of sort $\bsort_1$, then let $V := \{ (\seq{\aterm},
  \varphi) \in A \mid \aterm_1$ is a variable$\}$ and $T :=
  \{ (\seq{\aterm},\varphi) \in A \mid \aterm_1$ is not a variable
  in $\FV(\varphi) \}$.  Note that $V$ and $T$ overlap in cases where
  $\aterm_1$ is a variable not occurring in $\varphi$.
  The result of the function is: 
      \[
      \OK(\seq{\avar},V,\mathsf{value}) \wedge
      \OK(\seq{\avar},T,\mathsf{term})
      \]
  In all cases, the recursive calls are defined, by the decrease in
  the third argument (and in the last case possibly also in the first
  and second argument).
\item if $m > 0$ and $b = \mathsf{value}$, then we assume that
  $\bsort_1$ occurs in $\Sigmalogic$ and for all $(\seq{\aterm},
  \varphi) \in A$ the first term, $\aterm_1$, is a variable (if not,
  we might define the function result as \textbf{false}, but this
  cannot occur in the algorithm).
  Then let $\avar_{n+1}$ be a fresh variable of sort $\bsort_1$, and
  let $A' := \{ ((\aterm_2,\ldots,\aterm_m),\varphi[\aterm_1:=
  \avar_{n+1}]) \mid ((\aterm_1,\ldots,\aterm_n),\varphi) \in A \}$;
  the result is: 
      \[
      \OK((\avar_1,\ldots,\avar_{n+1}),A',\mathsf{either})
      \]
  Note that $A'$ has equally many function symbols as and fewer
  variables than $A$, and that we indeed have suitable sort
  sequences ($\asort_1,\ldots, \asort_n,\bsort_1)$ for the variables
  and $\bsort_2,\ldots,\bsort_m$ for the term sequences);
\item if $m > 0$ and $b = \mathsf{term}$ and for all $(\seq{\aterm},
  \varphi) \in A$ the first term, $\aterm_1$, is a variable, then we
  assume (like we did in the previous case) that never $\aterm_1 \in
  \varphi$, and let
  $A' := \{ ((\aterm_2,\ldots,\aterm_m),\varphi) \mid ((\aterm_1,
  \ldots,\aterm_m), \varphi) \in A \}$; the result is: 
      \[
      \OK(\seq{\avar},A',\mathsf{either})
      \]
  $A'$ has at most as many function symbols as and
  fewer variables than $A$.
\item if $m > 0$ and $b = \mathsf{term}$ and there is some
  $(\seq{\aterm},\varphi) \in A$ where $\aterm_1$ is not a variable,
  then let $f_1,\ldots,f_k$ be all non-value constructors with output
  sort $\bsort_1$ and let $A_1,\ldots,A_k$ be defined as follows:
  $A_i := \{ (\seq{\aterm},\varphi) \in A \mid \aterm_1$ is a variable
  or has the form $\afun_i(\seq{\bterm}) \}$.

  Now, for all $i$: if $f_i$ has sort declaration $[\csort_1 \times
  \cdots \times \csort_p] \arrtype \bsort_1$, then we consider the new
  sort sequence $\seq{\bsort}'$ with $\seq{\bsort'} = (\csort_1,\ldots,
  \csort_p,\bsort_2,\ldots,\bsort_m)$; for every $(\seq{\aterm},\varphi)
  \in A_i$ we define:
  \begin{itemize}
  \item if $\aterm_1 = \afun_i(\bterm_1,\ldots,\bterm_p)$, then
    $\seq{\cterm} := (\bterm_1,\ldots,\bterm_p,\aterm_2,\ldots,
    \aterm_m)$ and $\psi := \varphi$
  \item if $\aterm_1$ is a variable, then let $\bvar_1,\ldots,
    \bvar_p$ be fresh variables with sorts $\csort_1,\ldots,
    \csort_p$ respectively, and let $\seq{\cterm} := (\bvar_1,
    \ldots,\bvar_p,\aterm_2,\ldots,\aterm_m)$ and $\psi := \varphi[
    \aterm_1:=\afun_i(\bvar_1,\ldots,\bvar_p)]$.
  \end{itemize}
  We let $B_i$ be the set of the corresponding $(\seq{\cterm},\psi)$
  for all $(\seq{\aterm},\varphi) \in A_i$.  Now, if $A_i$
  contains any element where $\aterm_i$ is not a variable, then $B_i$
  contains fewer function symbols (not counting constraints), so also
  fewer than $A$ (as $A_i \subseteq A$).  If all $\aterm_i$ are
  variables, then $A_i$ is a strict subset of $A$, which misses at
  least one element where $\aterm_i$ contains a symbol, so also $B_i$
  has fewer symbols.  Either way, we are safe defining the result as:
      \[
      \OK(\seq{\avar},B_1,\mathsf{either}) \wedge
      \cdots \wedge \OK(\seq{\avar},B_k,\mathsf{either})
      \]
\end{itemize}

\medskip
Correctness of this algorithm is proved using the following technical
result.

\begin{lemma}\label{lem:algorithm}
For any suitable $n,m,\seq{\asort},\seq{\bsort},\seq{\avar},b$ and
$A$ such that \OK$(\seq{\avar},A,b) = \mathtt{true}$, we have,
for any sequence $(\aterm_1,\ldots,\aterm_n)$ of values and any
sequence $(\bterm_1,\ldots,\bterm_m)$ of ground constructor terms: if
one of the following conditions holds,
\begin{itemize}
\item $b = \mathsf{either}$ or $m = 0$
\item $b = \mathsf{value}$ and $\bterm_1$ is a value
\item $b = \mathsf{term}$ and $\bterm_1$ is not a value, and for all
  $((\cterm_1,\ldots,\cterm_n),\varphi) \in A$: $\cterm_1 \notin
  \FV(\varphi)$
\end{itemize}
then there is some $((\cterm_1,\ldots,\cterm_m),\varphi) \in A$
and a substitution $\gamma$ with $\gamma(\avar_i) = \aterm_i$ such
that:
\begin{itemize}
\item each $\bterm_i = \cterm_i\gamma$
\item $\varphi\gamma$ is a valid ground logical constraint
\end{itemize}
\end{lemma}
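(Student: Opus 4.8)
The plan is to prove the statement by well-founded induction on exactly the measure that makes \OK\ well-defined---first the number of function symbols occurring in the terms of $A$, then the number of variables in $A$, and finally the flag $b$ (with $\mathsf{either} > \mathsf{term}, \mathsf{value}$)---so that the induction hypothesis is available for every recursive call appearing in the definition of $\OK(\seq{\avar}, A, b)$. The proof then proceeds by a case distinction following precisely the clauses in the definition of \OK. In each recursive clause I would (i) check that the preconditions of this lemma hold for the recursive call, so that the induction hypothesis applies, and (ii) convert the witnessing element and substitution it returns into a witness for the original call.

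For the base case $m = 0$, since $\OK(\seq{\avar}, A, b) = \mathtt{true}$ the formula $\quant{\exists}{y_1 \ldots y_k}{\bigvee_{((), \varphi) \in A} \varphi}$ is valid. Instantiating its free variables $\seq{\avar}$ by the given values $\seq{\aterm}$, validity yields values $c_1, \ldots, c_k$ for $y_1, \ldots, y_k$ under which some disjunct $\varphi$ evaluates to $\top$. Taking $\gamma = [\avar_i := \aterm_i,\ y_j := c_j]$ gives the required $((), \varphi) \in A$; the matching conditions on the $\bterm_i$ are vacuous and $\varphi\gamma$ is ground and valid. The routine recursive cases are those with $b = \mathsf{either}$, where I would argue that the sort $\bsort_1$ dictates the flag of the recursive call and that the status of $\bterm_1$ aligns with it: if $\bsort_1 \notin \Sigmalogic$ then $\bterm_1$ is necessarily a non-value and no variable of sort $\bsort_1$ can occur in a constraint (so the $\mathsf{term}$ precondition $\cterm_1 \notin \FV(\varphi)$ holds automatically); if every constructor of $\bsort_1$ is a value then $\bterm_1$ is a value; and in the mixed case the split into $V$ and $T$ routes a value $\bterm_1$ to the $\mathsf{value}$ call on $V$ and a non-value $\bterm_1$ to the $\mathsf{term}$ call on $T$, where by construction $\cterm_1 \notin \FV(\varphi)$. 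Since $V, T \subseteq A$, the returned witness serves directly for $A$.

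The substantive work is in the variable-elimination clauses. In the $\mathsf{value}$ clause every $\aterm_1$ is a variable; I would apply the induction hypothesis to $\OK((\avar_1, \ldots, \avar_{n+1}), A', \mathsf{either})$ with the extended value sequence $(\aterm_1, \ldots, \aterm_n, \bterm_1)$, legitimate because $\bterm_1$ is a value, and the shortened term sequence $(\bterm_2, \ldots, \bterm_m)$. The returned element of $A'$ originates from some $((v, \cterm_2, \ldots, \cterm_m), \varphi) \in A$ carrying $A'$-constraint $\varphi[v := \avar_{n+1}]$; composing the returned substitution with the binding $v := \bterm_1$ recovers a $\gamma$ for which $\varphi\gamma$ coincides with $\varphi[v := \avar_{n+1}]\gamma'$, hence is valid, while $v\gamma = \bterm_1$. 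The $\mathsf{term}$-with-variable clause is analogous but easier: since $v \notin \FV(\varphi)$, dropping $\aterm_1$ is harmless and I may freely set $\gamma(v) = \bterm_1$ without disturbing validity. Finally, in the $\mathsf{term}$-with-constructor clause $\bterm_1$ is a non-value ground constructor term, hence of the form $\afun_j(\dterm_1, \ldots, \dterm_p)$ for one of the enumerated non-value constructors $\afun_j$; applying the induction hypothesis to $\OK(\seq{\avar}, B_j, \mathsf{either})$ with the decomposed term sequence $(\dterm_1, \ldots, \dterm_p, \bterm_2, \ldots, \bterm_m)$ yields a witness in $B_j$ that I would pull back to $A_j \subseteq A$, handling the sub-cases $\aterm_1 = \afun_j(\ldots)$ and $\aterm_1$ a variable exactly as above.

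I expect the main obstacle to be the substitution bookkeeping in the two clauses that replace a left-hand variable by a fresh variable or a constructor pattern (the $\mathsf{value}$ clause and the variable sub-case of the constructor clause): one must verify carefully that $\psi\gamma' = \varphi\gamma$ under the reconstructed $\gamma$---that is, that the substitution $[\aterm_1 := \avar_{n+1}]$, respectively $[\aterm_1 := \afun_j(\seq{\bvar})]$, commutes correctly with the returned assignment---and that freshness of the introduced variables prevents any capture or clash. A secondary point demanding care is the invariant, used implicitly in the $\mathsf{term}$ clauses, that the eliminated variable never occurs in the constraint; this must be threaded through the recursion to guarantee that $\varphi\gamma$ remains ground and well-defined.
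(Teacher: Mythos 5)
Your proposal is correct and follows essentially the same route as the paper's proof: induction on the (termination measure of the) recursive definition of $\OK$, a case split mirroring its clauses, and in each recursive clause a check of the lemma's preconditions followed by reconstruction of the witnessing pair and substitution, using linearity and freshness to justify the substitution bookkeeping. The points you flag as delicate (commutation of $[\aterm_1:=\avar_{n+1}]$ resp.\ $[\aterm_1:=\afun_j(\seq{\bvar})]$ with the returned substitution, and threading the invariant $\cterm_1\notin\FV(\varphi)$ into the $\mathsf{term}$ calls) are exactly where the paper's proof spends its care, and your treatment of them is sound.
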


\begin{proof}
By induction on the derivation of
\OK$(\seq{\avar},A,b) = \mathtt{true}$.
Let values $(\aterm_1,\ldots,\aterm_n)$ and ground constructor terms
$(\bterm_1,\ldots,\bterm_m)$ which satisfy the conditions be given.

If $m = 0$, then let $\psi := \bigvee_{((),\varphi) \in A} \varphi$.
By definition of \OK, $\quant{\exists}{\bvar_1 \ldots
\bvar_k}{\psi}$ is valid and $\FV(\psi) \subseteq \{ \avar_1,\ldots,
\avar_n,\bvar_1,\ldots,\bvar_k \}$.  That is, there are values $v_1,
\ldots,v_k$ such that for all values $u_1,\ldots,u_n$ the ground
constraint $\psi[\seq{\bvar}:=\seq{v},\seq{\avar}:=\seq{u}]$ is valid.
In particular, we can take $\seq{\aterm}$ for $\seq{u}$.  Define
$\gamma := [\bvar_1:=v_1,\ldots,\bvar_k:=v_k,\avar_1:=\aterm_1,
\ldots,\avar_n:=\aterm_n]$.
Then $\psi\gamma$ is valid, and since it is ground, some clause in the
disjunction must be valid; so some $((),\varphi) \in A$ where
$\varphi\gamma$ is a valid ground constraint.  This is what the lemma
requires.

If $m > 0$ and $b = \mathsf{either}$ and $\bsort_1$ does not occur in
$\Sigmalogic$, then \OK$(\seq{\avar},A,\mathsf{term})$ holds.
Since there are no values of sort $\bsort_1$, the term $\bterm_1$ is
not a value; for the same reason, variables of sort $\bsort_1$ cannot
occur in any constraint $\varphi$.  Thus, the conditions for the
induction hypothesis are satisfied; we find a suitable $\gamma$ and
$(\seq{\cterm},\varphi)$.

If $m > 0$ and $b = \mathsf{either}$ and $\bsort_1$ does occur in
$\Sigmalogic$, and all constructors with output sort $\bsort_1$ are
values, then \OK$(\seq{\avar},A,\mathsf{value})$ holds;
moreover, $\bterm_1$ is necessarily a value, so we can again apply
the induction hypothesis.

If $m > 0$ and $b = \mathsf{either}$ and there are both values and
other constructors with output sort $\bsort_1$, then both
\OK$(\seq{\avar},T,\mathsf{term})$ and
\OK$(\seq{\avar},V,\mathsf{value})$ must hold.
If $\bterm_1$ is a value, then the conditions to apply the induction
hypothesis with $V$ are satisfied; if not, the conditions to apply it
with $T$ are satisfied!  Since both $T$ and $V$ are subsets of $A$,
this results in a suitable element and substitution.

If $m > 0$ and $b = \mathsf{value}$, then 
we may
assume that $\bterm_1$ is a value.  Let $\avar_{n+1}$ be a fresh
variable of sort $\bsort_1$, and $A' := \{ ((\cterm_2,\ldots,
\cterm_m),\varphi[\cterm_1:=\avar_{n+1}]) \mid (\seq{\cterm},
\varphi) \in A \}$.  
Applying the induction hypothesis with
$n+1,m-1,(\seq{\asort},\bsort_1),(\bsort_2,\ldots,\bsort_m),(\avar_1,
\ldots,\avar_{n+1}),A',\mathsf{either}$ and $(\aterm_1,\ldots,
\aterm_n,\bterm_1)$ and $(\bterm_2,\ldots,\bterm_m)$.
%
gives
an element $(\seq{\cterm},\varphi) \in A'$ and $\gamma$
such that $\gamma(\avar_i) = \aterm_i$ for $1 \leq i \leq n$ and
$\gamma(\avar_{n+1}) = \bterm_1$ and each $\bterm_{i+1} = \cterm_i
\gamma$, and $\varphi\gamma$ is a valid ground logical constraint.
Now, $(\seq{\cterm},\varphi)$ can be written as $((\dterm_2,\ldots,
\dterm_n),\varphi'[\dterm_1:=\avar_{n+1}])$ for some $(\seq{\dterm},
\varphi') \in A$.  So let $\delta$ be the substitution $\gamma \cup
[\dterm_1:=\gamma(\avar_{n+1})]$.  
Noting that by linearity
$\dterm_1$ cannot occur in the other $\dterm_i$, and that
$\varphi'\delta = \varphi\gamma$ because $\avar_{n+1}$ does not occur
in $\varphi'$, 
each further $\delta(\avar_i) =
\gamma(\avar_i) = \aterm_i$ and $\bterm_1 = \gamma(\avar_{n+1}) =
\delta(\dterm_1) = \dterm_1\delta$ and $\bterm_i = \cterm_i\gamma =
\dterm_i\delta$ for larger $i$.\linebreak\vspace{-10pt}

If $m > 0$ and $b = \mathsf{term}$, then we can assume that $\bterm_1$
is not a value.  If all $(\seq{\cterm},\varphi) \in A$ have a variable
for $\cterm_1$, then we use the induction hypothesis and find
a suitable element $((\cterm_2,\ldots,\cterm_m),\varphi) \in A'$ and
substitution $\gamma$ with $\gamma(\avar_i) = \aterm_i$ for all $1
\leq i \leq n$, such that each $\bterm_i = \cterm_i\gamma$ ($i > 1$)
and $\varphi\gamma$ is a valid ground logical constraint.  Choose
$\delta := \gamma \cup [\cterm_1:=\bterm_1]$ (this is safe by
linearity).  The same requirements are satisfied, and also $\bterm_1
= \cterm_1\delta$!

Finally, suppose $m > 0$ and $b = \mathsf{term}$ and $A$ has some
element $(\seq{\cterm},\varphi)$ where $\cterm_1$ is not a variable;
by assumption it is also not a value.
By the conditions, we may assume that always $\cterm_1 \notin
\FV(\varphi)$.
Let $\afun_1,\ldots,\afun_k$ be all constructors in $\Sigmaterms
\setminus \Values$ with output sort $\bsort_1$.  Since also
$\bterm_1$ is not a value, but is a ground constructor term, it can
only have the form $\afun_p(\dterm_1,\ldots,\dterm_k)$ for some $p,
\seq{\dterm}$.
%
Observing that \OK$(\seq{\avar},B_p,\mathsf{either})$ must
hold, we use the induction hypothesis, for $\seq{\avar},\ 
\seq{\aterm}$ and $(\dterm_1,\ldots,\dterm_k,\bterm_2,\ldots,
\bterm_m)$, and find both a suitable tuple $((\eterm_1,\ldots,
\eterm_k,\cterm_2,\ldots,\cterm_m),\varphi) \in B_p$ and a
substitution $\gamma$ which respects $\varphi$, maps $\seq{\avar}$ to
$\seq{\aterm}$ and has $\eterm_i\gamma = \dterm_i$ and
$\cterm_j\gamma = \bterm_j$ for all $i,j$.

By definition of $B_p$, we have $((\cterm_1,\ldots,\cterm_m),
\varphi) \in A$ for some $\cterm_1$ which is either a variable (in
which case all $\eterm_i$ are fresh variables), or $\cterm_1 =
\afun_p(\eterm_1,\ldots,\eterm_k)$.
In the case of a variable, $\cterm_1$ cannot occur in
any of the other $\cterm_i$ by the linearity requirement, nor in
$\varphi$ by the conditions.
Thus, we can safely assume that $\cterm_1$ does not occur in the
domain of $\gamma$, and choose $\delta := \gamma
\cup [\cterm_1:=\afun(\dterm_1,\ldots,\dterm_k)]$.  Then
$\varphi\delta = \varphi\gamma$ is still a valid ground constraint,
each $\aterm_i = \avar_i\gamma = \avar_i\delta$ and for $i > 1$ also
$\bterm_i = \cterm_i\gamma = \cterm_i\delta$.  Finally, $\bterm_1 =
\afun_p(\dterm_1,\ldots,\dterm_k) = \cterm_1\delta$ as required.
In the alternative case that $\cterm_1 = \afun_p(\eterm_1,\ldots,
\eterm_k)$, we observe that $\gamma$ already suffices:
each $\aterm_i = \avar_i\gamma$, for $i > 1$ we have $\bterm_i =
\cterm_i\gamma$ and $\bterm_1 = \afun_p(\dterm_1,\ldots,\dterm_k) =
\afun_p(\eterm_1\gamma,\ldots,\eterm_k\gamma) = \cterm_1\gamma$.
\qed
\end{proof}

With this, we can easily reach our main result:

\begin{theorem}
A left-linear and left-value-free constructor-sound LCTRS with rules
$\Rules$ is quasi-reductive if for all defined and calculation
symbols $\afun$:
\OK$((),A_\afun,\mathsf{either})$ holds, where $A_\afun :=
\{ (\seq{\ell},\varphi) \mid \ell \arrz r\ \constraint{\varphi} \in
\Rules \cup \Rulescalc \wedge \ell = \afun(\seq{\ell}) \}$.
\end{theorem}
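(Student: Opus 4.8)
The plan is to derive quasi-reductivity from the two results already in hand: the reformulation in Lemma~\ref{lem:minex} and the technical Lemma~\ref{lem:algorithm}. Since by Theorem~\ref{thm:constructorrules} we may assume without loss of generality that $\Rules$ is a constructor TRS, every left-hand-side argument is a value-free constructor term, so each $A_\afun$ is a legitimate input to \OK\ and the hypothesis is meaningful. By Lemma~\ref{lem:minex} it then suffices to show that every \emph{basic} ground term $\afun(\aterm_1,\ldots,\aterm_n)$, with $\afun$ a defined or calculation symbol and all $\aterm_i \in \Terms(\Constructors,\emptyset)$, reduces. So I would fix such a term and argue it is reducible.

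To that end I would apply Lemma~\ref{lem:algorithm} to $A_\afun$ with the \emph{empty} variable sequence (so $n=0$, and there are no values $\seq{\aterm}$ to supply), flag $b = \mathsf{either}$, and the ground constructor terms $\aterm_1,\ldots,\aterm_n$ in the role of $\seq{\bterm}$. Because $b=\mathsf{either}$ the first alternative among the lemma's hypotheses holds unconditionally, and \OK$((),A_\afun,\mathsf{either}) = \mathtt{true}$ is exactly the theorem's assumption. The lemma then yields a pair $(\seq{\ell},\varphi) \in A_\afun$---which by construction of $A_\afun$ arises from some rule $\ell \arrz r\ \constraint{\varphi} \in \Rules \cup \Rulescalc$ with $\ell = \afun(\seq{\ell})$---together with a substitution $\gamma$ satisfying $\aterm_i = \ell_i\gamma$ for every $i$ and making $\varphi\gamma$ a valid ground logical constraint. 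In particular $\afun(\aterm_1,\ldots,\aterm_n) = \ell\gamma$.

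What remains is to check that a suitable adjustment of $\gamma$ \emph{respects} this rule, i.e.\ sends every variable of $\LVar(\ell \arrz r\ \constraint{\varphi}) = \FV(\varphi) \cup (\FV(r)\setminus\FV(\ell))$ to a value while keeping $\interpret{\varphi\gamma}=\top$; the reduction $\afun(\aterm_1,\ldots,\aterm_n) \arr{\Rules} r\gamma$ then follows immediately. Validity of the ground constraint gives $\interpret{\varphi\gamma}=\top$ for free. For the value condition I would split the constraint variables. If $x \in \FV(\varphi)\cap\FV(\ell)$, then $\gamma(x)$ is at once a subterm of the ground constructor term $\aterm_i$ and a subterm of the logical term $\varphi\gamma$, hence a ground term that is both a constructor term and a logical term; as $\Sigmaterms \cap \Sigmalogic \subseteq \Values$ and every value is a constant, $\gamma(x)$ must be a value. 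If $x \in \FV(\varphi)\setminus\FV(\ell)$, I would replace $\gamma(x)$ by the value of the ground logical term $\gamma(x)$: this leaves $\interpret{\varphi\gamma}$ unchanged and does not touch $\ell\gamma$. Finally, any fresh variable of $r$ lying outside $\FV(\varphi)$ can be mapped to an arbitrary value of its sort.

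The crux---and the step I expect to need the most care---is precisely this bridge from ``$\varphi\gamma$ is a valid ground \emph{logical} constraint'' (all that Lemma~\ref{lem:algorithm} literally guarantees) to ``$\gamma$ instantiates the constraint variables by \emph{values}'' (what respecting a rule demands). This is exactly the invariant that the algorithm's design enforces---the $\mathsf{term}$-branch side condition $\cterm_1 \notin \FV(\varphi)$ together with the promotion of constraint variables in the $\mathsf{value}$ branch---so the argument above should close, but it is where the reasoning must be written out rather than waved through. As a final sanity check I would inspect the calculation symbols, where $A_\afun$ is the single pair induced by $\Rulescalc$, to confirm that the reduction obtained there is indeed a calculation step.
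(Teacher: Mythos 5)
Your proposal follows the paper's own proof essentially verbatim: reduce to basic ground terms via Lemma~\ref{lem:minex}, then apply Lemma~\ref{lem:algorithm} with $n=0$ and $b=\mathsf{either}$ to extract a matching rule and a substitution $\gamma$ with $\ell\gamma = \afun(\seq{\aterm})$ and $\varphi\gamma$ valid. The only difference is that you explicitly work out the bridge from ``$\varphi\gamma$ is a valid ground logical constraint'' to ``$\gamma$ respects the rule'' (via $\Sigmaterms \cap \Sigmalogic \subseteq \Values$ for pattern variables, and replacing the remaining constraint variables by the values of their ground logical images), a step the paper leaves implicit; your handling of it is correct.
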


\begin{proof}
By Lemma~\ref{lem:minex} it suffices to prove that all terms of the
form $\afun(\aterm_1,\ldots,\aterm_n)$ can be reduced, where $\afun$
is a defined or calculation symbol and all $\aterm_i$ are
constructor terms.  This holds if there is a rule
$\afun(\ell_1,\ldots,\ell_n) \arrz r\ \constraint{\varphi} \in \Rules \cup
\Rules_{\mathtt{calc}}$ and a substitution $\gamma$ such that each
$\aterm_i = \ell_i\gamma$ and $\varphi\gamma$ is a satisfiable
constraint.  By Lemma~\ref{lem:algorithm} (which we can apply because
left-hand sides of rules are linear and value-free), that is exactly
the case if \OK$((),A_\afun,\mathsf{either}) = \mathtt{true}$!
\qed
\end{proof}

\section{Conclusions}

In this paper, we have given an algorithm to prove quasi-reductivity
of LCTRSs, whose core idea is to identify missing cases in the
rules.  Although we needed to impose certain restrictions to use
this algorithm, these restrictions seem very reasonable.

Although we have not proved so here, we believe that our method is not
only sound, but also complete for the class of left-linear,
left-value-free constructor-sound LCTRSs.  We intend to explore this
in future work.

The method presented in this paper has been fully implemented in our
tool \ctrl, which is available at
\begin{center}
\url{http://cl-informatik.uibk.ac.at/software/ctrl/}
\end{center}

\bibliographystyle{ACM-Reference-Format-Journals}
\bibliography{references}

\end{document}